\newcommand{\figref}[1]{Fig. \ref{#1}}
\newtheorem{remark}{Remark}
\newtheorem{observation}{Observation}
\newcommand{\set}[1]{\ensuremath{\textsf{#1}}}
\newcommand{\denotation}[1]{\llbracket #1 \rrbracket}
\renewcommand{\P}[1]{\mathbb{P}(#1)}
\newcommand{\TA}{\set{TA}}
\newcommand{\VSA}{\textsc{VSA}}
\newcommand{\vsa}[1][N]{\widetilde{#1}}
\newcommand{\vsajoin}[1][F]{#1_{\text{\tiny\Bowtie}}}
\newcommand{\vsaunion}{\pmb{\mathsf{U}}}
\newcommand{\is}{\;:=\;}
\newcommand{\palt}{\;|\;}
\newcommand{\bigmid}{\mathrel{\big|}}
\newcommand{\vsatota}[1]{\mathsf{VSAToTA}(#1)}
\newcommand{\subs}[1]{\mathsf{subs}(#1)}
\begin{document}

\title{Version Space Algebras are Acyclic Tree Automata}


\author{James Koppel}
\affiliation{
  \institution{MIT}            
  \city{Cambridge}
  \state{MA}
  \country{USA}
}
\email{jkoppel@mit.edu}          

\begin{abstract}
Version space algebras are ways of representing spaces of programs which can be combined using union, intersection, and cross-product/``join" operators. In their reified form as ASTs with explicit union and join nodes, they have the ability to compactly represent exponentially-large spaces of programs, owing to which they have become become the most popular approach to enumerative program synthesis since the introduction of FlashFill in 2010.  We present a linear-time semantics-preserving constructive embedding from version space algebras into nondeterministic finite tree automata, showing that the former are but a special case of the latter. Combined with recent results finding a correspondence between e-graphs and minimal deterministic tree automata, this shows that tree automata are strict generalizations of all recent major approaches to efficiently representing large spaces of programs by sharing.
\end{abstract}



\keywords{synthesis e-graphs tree-automata}  

\maketitle

\section{Introduction}

Enumerative program synthesis is the technique of listing a large number of programs and choosing the best one; deductive program synthesis is the process of starting with description and finding the right set of rewrites to transform it into the desired program. Since there can be exponentially many terms and rewrite sequences of a given size, both are greatly aided by compact representations of large spaces of terms.

In the past decade, two approaches have gained great popularity for solving this problem, version space algebras (VSAs) and equivalence graphs (e-graphs), with the latter also being applied to semantic code search \cite{premtoon2020semantic} and equivalence-checking \cite{alet2021large,stepp2011equality}. More recently, Wang \cite{wang2017synthesis} applied a third and very old data structure to the same problem of representing spaces of terms: tree automata.

We give an example of each. Consider representing the set of $9$ terms $\mathcal{T}=\{f(g(X),g(Y))\}$, where $X,Y \in \{a,b,c\}$.  \figref{fig:ex1} gives the e-graph, VSA, and tree automaton representing this set.

But the similarity between the three is striking, and they all appear to function by sharing subterms in the same way. Though works involving VSAs, e-graphs, and tree automata typically construct and use them very differently, their representational capabilities appear quite similar.

In fact, we shall see that their capabilities are not only similar but identical, with tree automata being strict generalizations of the other two, which correspond to restricted forms of tree automata. Pollock and Haan \cite{pollock_e-graphs_nodate} provided half this story in a recent write-up, showing that e-graphs are isomorphic to minimal determinstic finite tree automata (minimal DFTAs), with the well-known congruence-closure algorithm used in e-graph maintenance being equivalent to tree-automata minimization. In this short paper, we show the other half, that version space algebra correspond to acyclic nondeterministic tree automata.

\begin{figure}
\centering
\begin{subfigure}[t]{0.34\textwidth}
\centering
\includegraphics[scale=0.45]{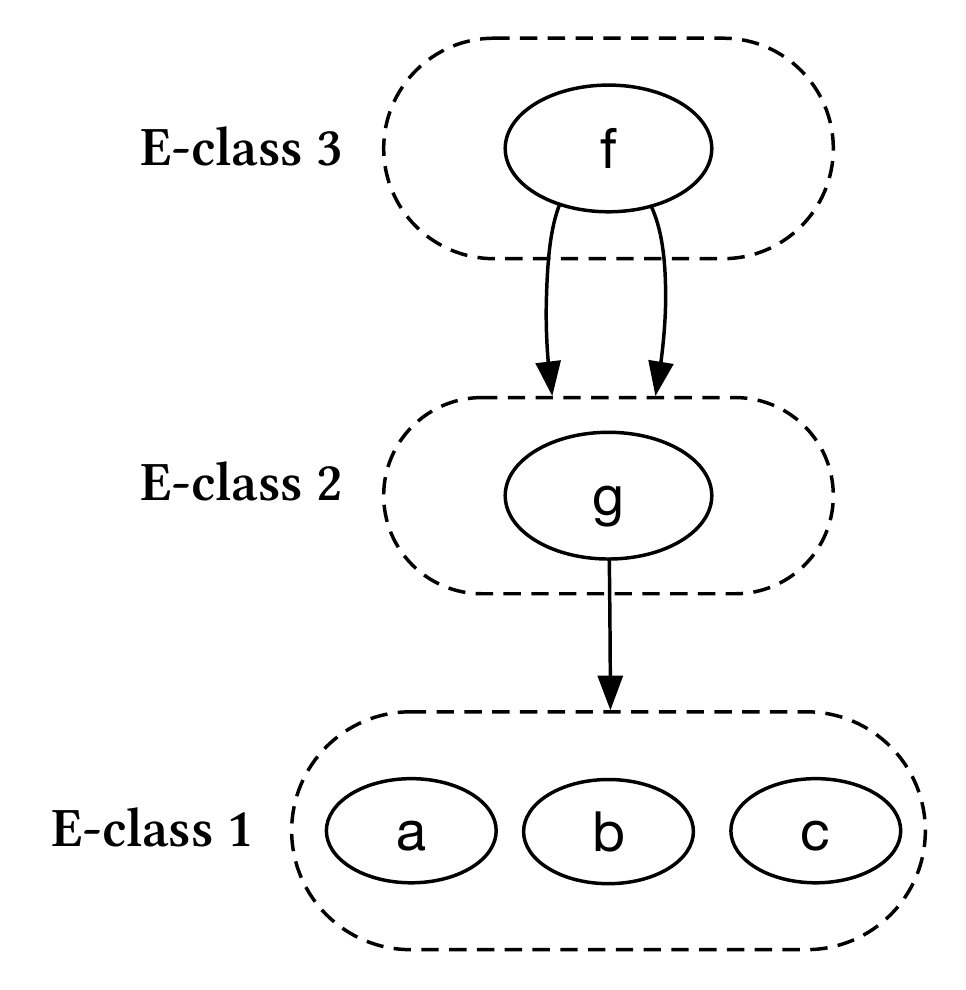}
\subcaption{E-graph}
\label{fig:ex1-egraph}
\end{subfigure}
~ 
\begin{subfigure}[t]{0.31\textwidth}
\centering
\includegraphics[scale=0.45]{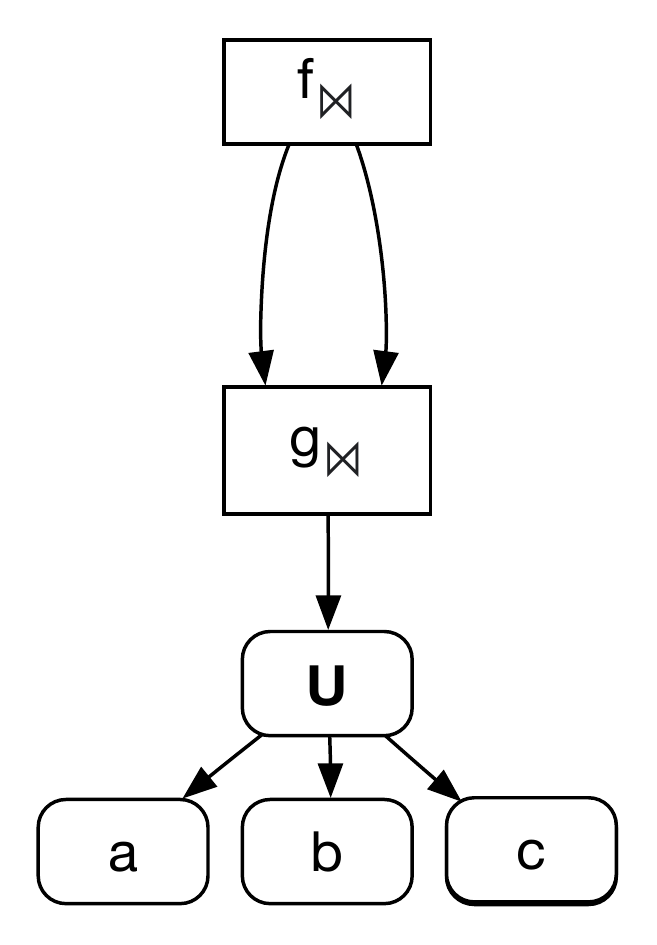}
\subcaption{Version-space algebra}
\label{fig:ex1-vsa}
\end{subfigure}
~
\begin{subfigure}[t]{0.25\textwidth}
\centering
\includegraphics[scale=0.45]{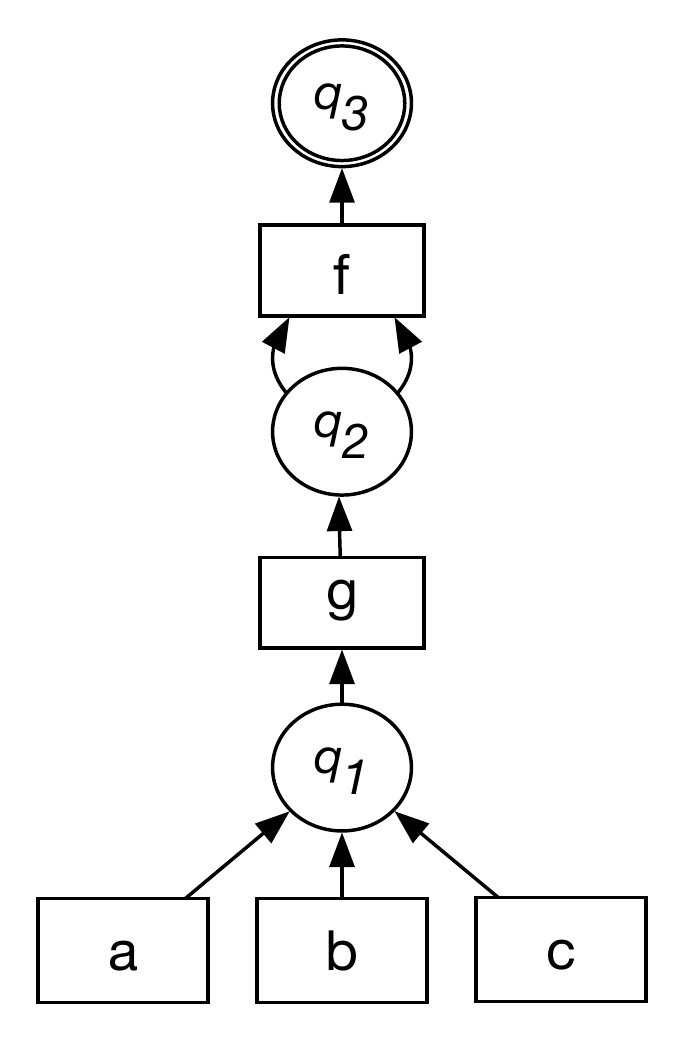}
\subcaption{Tree automaton. Rectangles represent transitions.}
\label{fig:ex1-dfta}
\end{subfigure}
\caption{}
\label{fig:ex1}
\end{figure}

There has been foreshadowing in the literature of a connection between VSAs and automata. The FlashFill paper \cite{gulwani2011automating}, which sparked the wave of synthesis papers based on version-space algebras, featured an intersection algorithm for associative VSAs immediately noted to be similar to the intersection of string automata.  Correspondences between schemes for sharing substructure and automata go at least back to the discovery of the correspondence between binary decision diagrams and string automata  \cite{michon1998automata}. Yet, though the similarities between VSAs and tree automata may be obvious to one who digests \figref{fig:ex1}, this paper appears to be the first to state it publicly and give a formal correspondence.

\section{Preliminaries}

We now present formal definitions of both version space algebras and tree automata. In both developments, we shall use $\Sigma$ to denote a signature, a set of symbols. Both VSAs and tree automata represent sets of terms over a given signature $\Sigma$. We do not formally restrict symbols to be of a given arity. We shall also provide a normalized form of version space algebras which makes their correspondence with tree automata particularly simple.

\subsection{Version Space Algebra}

Abstractly speaking, a version space algebra (VSA) is a space of programs equipped with union, intersection, and independent join operators.  \cite{lau2003programming} In the past decade however, they have become synonymous with their typical concrete representation as ASTs augmented with explicit union and join nodes, used for compactly storing large spaces of programs in enumerative synthesis. In this form, they can be considered the general version of searching program spaces by dynamic programming, as they achieve their efficiency by sharing common subterms of programs within the space. The FlashMeta/PROSE paper \cite{polozov2015flashmeta} is the definitive work for VSAs as currently used; we replicate their formalism in brief here.

\begin{definition}[Version space algebra (Orthodox definition)]
    \label{def:vsa}
    Let $N$ be a symbol in a signature $\Sigma$.
    A \emph{version space algebra} over $\Sigma$ is a representation for a set $\vsa$ of terms rooted at $N$. The grammar of VSAs is:
    \vspace{1pt}
    \begin{align*}
        \vsa \is \left\{P_1, \dots, P_k\right\} \palt \vsaunion(\vsa_1, \dots, \vsa_k) \palt \vsajoin(\vsa_1, \dots, \vsa_k)
    \end{align*}
    where $F$ is any $k$-ary node in $\Sigma$, and $P_j$ are terms generated by signature $\Sigma$.
    The semantics of a VSA as a set of programs is given as follows:
    \vspace{0.5em}
    \begin{eqnarray*}
        \denotation{\left\{P_1, \dots, P_k\right\}}^\VSA &=& \left\{P_1, \dots, P_k\right\} \\
        \denotation{\vsaunion(\vsa_1, \dots, \vsa_k)}^\VSA &=& \bigcup_{i=1}^k \denotation{\vsa_i}^\VSA \\
        \denotation{\vsajoin(\vsa_1, \dots, \vsa_k)}^\VSA &=& \left\{ F(P_1, \dots, P_k) \bigmid P_i \in \denotation{\vsa_i}^\VSA \right\}
    \end{eqnarray*}
    \vspace{0.5em}
\end{definition}

While syntactically represented as trees, VSAs are treated as DAGs by automatically sharing all identical subterms. The requirement that all duplicate nodes are shared, while semantically unnecessary, becomes quite important when analyzing the embedding into tree automata. We hence modify our presentations to annotate every distinct $\vsaunion$ and $\vsajoin$ node with a unique explicit label $l$. The VSA must satisfy the \textbf{uniqueness property} that no two distinct nodes may have the same children. Further, we shall henceforth assume that these are the only two node types in a VSA:

\begin{observation}
If a language has a finite set of constructors and constants, then the set nodes of VSAs are redundant. All program sets can be expressed using union and join nodes, e.g.: $\left\{f(A, g(B)), g(C)\right\}$ can be represented $\vsaunion(f(A(), g(B())), g(C()))$, treating $A, B$, and $C$ as nullary constructors.
\end{observation}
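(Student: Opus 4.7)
The plan is to prove the observation by structural induction on terms, showing that any singleton set $\{P\}$ can be encoded purely with join nodes, and then any finite set $\{P_1,\dots,P_k\}$ by applying a union node on top of such singleton encodings. The finite-constructor hypothesis only matters to ensure that nullary constructors are available to serve as base cases; without them, there would be no leaf VSA.

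First I would define a translation $\tau$ mapping every closed term $P$ over $\Sigma$ to a VSA $\tau(P)$ built solely from unions and joins. For the base case, if $c$ is a nullary constructor, let $\tau(c)$ be the $0$-ary join node labeled $c$; by the join clause of Definition~\ref{def:vsa}, the empty Cartesian product is the single empty tuple, so $\denotation{\tau(c)}^\VSA = \{c()\}$. For the inductive case $P = F(Q_1,\dots,Q_m)$, let $\tau(P)$ be the $m$-ary join labeled $F$ with children $\tau(Q_1),\dots,\tau(Q_m)$. A routine calculation using the join clause and the induction hypothesis gives $\denotation{\tau(P)}^\VSA = \{P\}$.

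Next I would extend $\tau$ to set nodes: $\tau(\{P_1,\dots,P_k\}) \is \vsaunion(\tau(P_1),\dots,\tau(P_k))$, whose denotation is exactly $\{P_1,\dots,P_k\}$ by the union clause. Applied bottom-up to an arbitrary VSA, this eliminates all set nodes while preserving denotations, since existing $\vsaunion$ and $\vsajoin$ nodes pass through unchanged up to recursive translation of their children.

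The only real subtlety lies in the \textbf{uniqueness property}: after translation, any two structurally identical subterms must share the same node in the resulting DAG. I would enforce this by memoizing $\tau$ via a hash-consing table keyed on the constructor symbol together with the identities of the already-built child nodes, so that any repeated invocation on an identical subterm returns the previously built node. I expect this bookkeeping to be the only nontrivial ingredient of the argument, since semantic correctness reduces in each of the three cases (constant, constructor application, set) to a one-line unfolding of the denotational clauses.
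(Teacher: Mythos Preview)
Your proposal is correct, and in fact it supplies considerably more than the paper does: the paper treats this observation as self-evident and offers no proof beyond the single worked example in the statement itself. Your structural induction on terms is precisely the natural formalization of that example---the paper's rewriting of $f(A,g(B))$ as nested join nodes is exactly your map $\tau$ applied to one concrete term, and wrapping the results in $\vsaunion$ is your extension to set nodes. You also go beyond the paper in attending to the uniqueness property via hash-consing; the paper does not discuss that wrinkle here at all.

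One small quibble: your gloss on the finite-signature hypothesis (that it ``ensures nullary constructors are available'') is not quite right, since finiteness alone does not guarantee existence of constants. But this is harmless: any closed term already bottoms out in nullary symbols, so your base case is available whenever the set node is nonempty, and the induction goes through regardless.
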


Finally, note that any join node $\vsajoin(\vsa_1, \dots, \vsa_k)$ can be replaced with a unary union node, namely  $\vsaunion(\vsajoin(\vsa_1, \dots, \vsa_k))$. Combining these observations, we can thus transform any VSA to be composed of strictly-alternating layers of union and join nodes. We use the symbol $\P{\vsa}$ (i.e.: sets of VSAs) to represent the union-node layers. Here is a definition for these normalized VSAs; the semantics are the same as in orthodox VSAs.

\begin{definition}[Version space algebra (Normalized definition)]
    \label{def:vsa}
    A \emph{normalized version space algebra} is a representation for a set $\vsa$ of terms. The grammar of normalized VSAs is:
    \vspace{1pt}
    \begin{align*}
        \vsa &\is \vsajoin^l(\P{\vsa}_1, \dots, \P{\vsa}_k) \\
        \P{\vsa} &\is \vsaunion^l(\vsa_1, \dots, \vsa_k)
    \end{align*}
\end{definition}

\subsection{Tree Automata}

Tree automata are generalizations of ordinary string automata to trees. Tree automata consume a tree bottom-to-top, associating a state with every node, choosing the state for each node based on the states of its children. When every node is unary except for a special ``start" symbol, they behave identically to string automata. Tree automata can both be considered operationally, as a machine that inputs a tree over a signature $\Sigma$ and outputs ``accept" or ``reject," and denotationally, as a representation of a set (or language) of trees. The definitive treatment of tree automata is Comon \cite{comon1997tree}.

We present our definition below, along with that of the denotation of a tree automaton:

\begin{definition}[(Bottom-up) Nondeterministic Finite Tree Automata (NFTA)]
A (bottom-up) finite tree automaton over signature $\Sigma$ is a tuple $\mathcal{A}=(\mathcal{Q}, \Sigma, \mathcal{Q}_f, \Delta)$ where $\mathcal{Q}$ is a set of states, $\mathcal{Q}_f \subset \mathcal{Q}$ is a set of final states, and $\Delta \subseteq \Sigma \times \mathcal{Q}^* \times \mathcal{Q}$ is a set of transitions of the form $f(q_1, \dots, q_k) \rightarrow q$.

The denotation of a transition and state, both within a larger automaton with transitions $\Delta$, are given mutually-recursively by:

\begin{eqnarray*}
\denotation{q}^\TA_\Delta &=& \bigcup \left\{ \denotation{F(q_1, \dots, q_k) \rightarrow q}^\TA_\Delta \bigmid (F(q_1, \dots, q_k) \rightarrow q) \in \Delta\right\} \\
\denotation{F(q_1, \dots, q_k)\rightarrow q}^\TA_\Delta &=& \left\{ F(P_1, \dots, P_k) \bigmid P_i \in \denotation{q_i}^\TA_\Delta \right\}
\end{eqnarray*}

\end{definition}

\section{Embedding}

In this section, we present the embedding of a version space algebras into tree automaton, and prove that the resulting automaton represents the same set of terms.

\begin{definition}[Embedding of VSA into TA]
\label{defn:embedding}
Let $\vsa$ be a normalized VSA over $\Sigma$. We shall construct a tree automaton corresponding to $\vsaunion^r(\vsa)$, where $r$ is a new ``root" label.

Let $\subs{\vsaunion^r(\vsa)}$ be the set of subnodes of $\vsaunion^r(\vsa)$. For each union node $\vsaunion^l(\dots) \in \subs{\vsaunion^r(\vsa)}$, define the automaton state $q^l$. Let the set of all such states be $\mathcal{Q}$. We define the transitions as

\[
  \Delta = \left\{ F(q^{l_1}, \dots, q^{l_k}) \rightarrow q^l  \bigmid \vsaunion^l(\dots, \vsajoin[F](\vsaunion^{l_1}(\dots), \dots, \vsaunion^{l_k}(\dots)), \dots) \in \subs{\vsajoin^r(\vsa)} \right\}
\]

Finally, we define

\begin{align*}
\vsatota{\vsa} = (\mathcal{Q}, \Sigma, \left\{q^r\right\}, \Delta)
\end{align*}

Note that the resulting automaton is not necessarily deterministic, as the same join node may be a child of distinct union nodes. It is, however, acyclic, as VSAs are acyclic.

\end{definition}

\begin{remark}
The construction of Definition \ref{defn:embedding} creates one transition in the tree automaton per edge between a join and union in the VSA, and each such transition may have as children a significant fraction of states in the automaton. Naively implemented, for a VSA with $V$ nodes and $E$ edges, it is thus actually $O(VE)$ in time and space, although the number of states and transitions is still $O(V+E)$. However, this can be accelerated by giving identity to repeated transitions. That is, given the set of transitions

\[
  F(q_1, \dots, q_k) \rightarrow r_1, \dots, F(q_1, \dots, q_k) \rightarrow r_m
\]

One can represent these by the factored representation $\delta_i = F(q_1, \dots, q_k)$ and $\delta_i \rightarrow r_1, \dots, \delta_i \rightarrow r_m$, where $i$ is some new label, thus restoring the implementation to $O(V+E)$ time and space. Graphically, this appears as multiple states of the tree automaton sharing the same inbound transitions (hyperedges).
\end{remark}

\begin{theorem}[Correctness of Embedding]
\[ 
  \denotation{\vsa}^\VSA = \denotation{\vsatota{\vsa}}^\TA
\]
\end{theorem}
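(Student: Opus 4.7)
The plan is to strengthen the theorem and prove the following by well-founded induction on the acyclic VSA DAG: \emph{for every union subnode $\vsaunion^l(\ldots)$ appearing in $\vsaunion^r(\vsa)$,}
\[
  \denotation{\vsaunion^l(\ldots)}^\VSA \;=\; \denotation{q^l}^\TA_\Delta.
\]
The theorem follows by instantiating at $l = r$: the root wrapper is a unary union, so $\denotation{\vsaunion^r(\vsa)}^\VSA = \denotation{\vsa}^\VSA$, and because $\{q^r\}$ is the unique final state, $\denotation{\vsatota{\vsa}}^\TA = \denotation{q^r}^\TA_\Delta$.

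For the inductive step, consider a union $\vsaunion^l(\vsa_1, \dots, \vsa_m)$; each child $\vsa_i$ is a join node whose head symbol I write $F_i \in \Sigma$ and whose union children I write $\P{\vsa}_{i,1}, \dots, \P{\vsa}_{i,k_i}$ with $\P{\vsa}_{i,j} = \vsaunion^{l_{i,j}}(\ldots)$. Unfolding the VSA semantics gives
\[
  \denotation{\vsaunion^l(\ldots)}^\VSA = \bigcup_{i=1}^{m} \bigl\{ F_i(P_1,\dots,P_{k_i}) \bigmid P_j \in \denotation{\P{\vsa}_{i,j}}^\VSA \bigr\},
\]
and the inductive hypothesis rewrites each $\denotation{\P{\vsa}_{i,j}}^\VSA$ as $\denotation{q^{l_{i,j}}}^\TA_\Delta$. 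Unfolding $\denotation{q^l}^\TA_\Delta$ via the mutually recursive NFTA denotation yields a union over all transitions in $\Delta$ with target $q^l$. The two expressions match term-for-term once I show that these transitions are precisely $\{F_i(q^{l_{i,1}}, \dots, q^{l_{i,k_i}}) \to q^l \mid 1 \le i \le m\}$.

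That last correspondence is the main obstacle. Soundness (each listed transition genuinely arises from Definition \ref{defn:embedding}) is immediate, because the subnode $\vsaunion^l(\ldots, \vsa_i, \ldots)$ with $\vsa_i$ having head $F_i$ and union children $\vsaunion^{l_{i,1}}(\ldots), \ldots$ is exactly such a matching pattern. Completeness---no other transition targets $q^l$---relies on the \emph{uniqueness property} of normalized VSAs: any transition into $q^l$ must originate from a subnode pattern rooted at some union labeled $l$, and uniqueness forces that union to be $\vsaunion^l$ itself, whose join children are exactly $\vsa_1, \dots, \vsa_m$. Once this bookkeeping is done, the two unions coincide, closing the inductive step. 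The base case $m = 0$ is trivial since both sides equal $\emptyset$.
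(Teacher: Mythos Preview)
Your proof is correct and follows essentially the same approach as the paper's: a structural induction on the (acyclic) VSA showing that each union node $\vsaunion^l$ denotes the same set as the corresponding state $q^l$, with the join case handled by unfolding definitions. You are in fact more careful than the paper about the ``completeness'' direction (that \emph{every} transition into $q^l$ arises from a child of $\vsaunion^l$), though note this follows simply from labels being unique identifiers rather than from the separate ``uniqueness property'' about shared children.
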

\begin{proof}
Let $\mathcal{Q}$ be the states of $\vsatota{\vsa}$, with union nodes $\vsaunion^l$ corresponding to states $q^l$. We show by mutual induction that

\[
  \denotation{\vsaunion^j(\dots)}^\VSA=\denotation{q^j}^\TA_\Delta
\]

and that, for any $q$,

\[
  \denotation{\vsajoin(\vsaunion^{i_1}(\dots), \dots, \vsajoin(\vsaunion^{i_k}(\dots))}^\VSA=\denotation{q^i}^\TA_\Delta
\]

The theorem immediately follows.

First, for join nodes:

\begin{align*}
    \denotation{\vsajoin(\vsaunion^{i_1}(\dots), \dots, \vsaunion^{i_k}(\dots)}^\VSA &=  \left\{ F(P_1, \dots, P_k) \bigmid P_j \in \denotation{\vsaunion^{i_j}}^\VSA \right\} \\
    &=  \left\{ F(P_1, \dots, P_k) \bigmid P_j \in \denotation{q^{i_j}}^\TA_\Delta \right\} \\
    &= \denotation{F(q^{i_1}, \dots, q^{i_k}) \rightarrow q}^\TA_\Delta
\end{align*}

Now, for union nodes:

\begin{align*}
    \denotation{\vsaunion^l(\vsajoin^{i_1}(\dots), \dots, \vsajoin^{i_k}(\dots))}^\VSA &=  \bigcup_{j=1}^k \denotation{F^{i_j}(\dots)}^\VSA  \\
    &= \bigcup_{j=1}^k \denotation{F^{i_j}(\dots) \rightarrow q^l}^\TA_\Delta \\
    &= \denotation{q^l}^\TA_\Delta
\end{align*}

\end{proof}

\section{Conclusion}

E-graphs became very popular in 2020-2021 with the publication of \textsc{egg} \cite{willsey2021egg} and a number of deductive synthesis/transformation papers based on it. By connecting e-graphs to tree automata, Pollock and Haan allowed a wealth of prior work to be applied to e-graphs. In particular, they solved the problem of e-graph intersection, making them usable in inductive synthesis. \cite{pollock_e-graphs_nodate}

But, through another lens, they showed that much of the excitement around e-graphs may be better directed at tree automata. Our work furthers this story, showing that tree automata were secretly behind another of the past decade's great advances in program synthesis, and promising to unite both communities in continuing to gain new techniques by study of tree automata.

Finally, a word to all data structure designers: if ever you encounter a tree-like or DAG-like data structure with alternating levels of conjunction-esque and disjunction-esque nodes, you may be dealing with a tree automaton in disguise.

\bibliography{citations.bib}

\end{document}